\newcommand{\vc}[1]{{\mathbf{ #1}}}
\newcommand{\ep}{{\mathbb {E}}}
\newcommand{\olW}{\overline{W}}
\newcommand{\LL}{{\mathcal{L}}}
\newcommand{\SINR}{{\rm SINR}}
\newtheorem{theorem}{Theorem}{}
{}
\newtheorem{remark}{Remark}{}
\newtheorem{prop}{Proposition}{}
\newtheorem{proof}{Proof}{}
\title{Bounds on Eavesdropper Performance for
a MIMO-NOMA Downlink Scheme}
\author{ \href{https://orcid.org/0000-0001-9956-2801}{\includegraphics[scale=0.06]{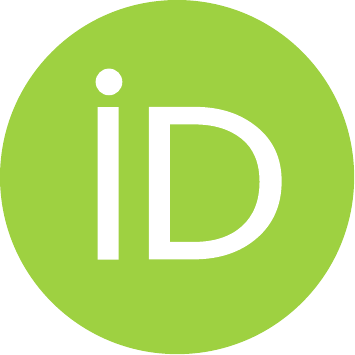}\hspace{1mm}Jennifer Chakravarty} \\
School of Mathematics\\
University of Bristol, United Kingdom\\
\texttt{jennifer.chakravarty@bristol.ac.uk} \\
\And
\href{https://orcid.org/0000-0002-3645-6670}{\includegraphics[scale=0.06]{orcid.pdf}\hspace{1mm}Oliver Johnson} \\
School of Mathematics\\
University of Bristol, United Kingdom\\
\texttt{oliver.johnson@bristol.ac.uk} \\
\AND
\href{https://orcid.org/0000-0002-4879-1206}{\includegraphics[scale=0.06]{orcid.pdf}\hspace{1mm}Robert Piechocki} \\
Department of Electrical \& Electronic Engineering \\
University of Bristol, United Kingdom \\
\texttt{robert.piechocki@bristol.ac.uk} 
}
\date{}
\tikzstyle{decision} = [diamond, draw, fill=blue!20, text width=4.5em, text badly centered, node distance=3cm, inner sep=0pt]
\tikzstyle{block} = [rectangle, draw, fill=blue!20, text width=6em, text centered, rounded corners, minimum height=4em]
\tikzstyle{line} = [draw, -latex']
\tikzstyle{cloud} = [draw, ellipse,fill=red!20, node distance=3cm, minimum height=2em]
\tikzstyle{clear} = [rectangle,fill=white, node distance=3cm, minimum height=2em, text width=5em, text centered]  
\begin{document}
\maketitle
\begin{abstract}
Non-Orthogonal Multiple Access (NOMA) is a multiplexing technique for future wireless, which when combined with Multiple-Input Multiple-Output (MIMO) unlocks higher capacities for systems where users have varying channel strength. NOMA utilises the channel differences to increase the throughput, while MIMO exploits the additional degrees of freedom (DoF) to enhance this. This work analyses the secrecy capacity, demonstrating the robustness of a combined MIMO-NOMA scheme at physical layer, when in the presence of a passive eavesdropper. We present bounds on the eavesdropper performance and show heuristically that, as the number of users and antennas increases, the eavesdropper's SINR becomes small, regardless of how `lucky' they may be with their channel.
\end{abstract}

\keywords{NOMA \and MIMO \and Physical Layer Security \and Information Theoretic Security}

\section{Introduction}\label{sec:intro}
In this paper we prove that from the viewpoint of Physical Layer Security \cite{poor2017survey} the Multiple-Input Multiple-Output (MIMO) Non-Orthogonal Multiple Access (NOMA) scheme \cite{ding} protects its messages from eavesdroppers. Further, from a Massive MIMO viewpoint, as the numbers of users and antennas grow, the job of an eavesdropper becomes harder, and thus the security of the system is further enhanced. These are promising results for the inherent security of MIMO-NOMA in 5G and future wireless deployments.

NOMA~\cite{saito2013noma} is a multiplexing technique in the code or power domain, which is particularly useful when users have very different channels and path loss characteristics. 
In this framework, the base station transmits a linear combination of messages which allocates more power to the user with the weaker channel. The receivers commonly use Successive Interference Cancellation (SIC) to retrieve their signal.

NOMA is an enabling technology for 5G new radio~\cite{ding_noma_survey, 6G}, due to the performance gains obtained. Since 5G and 6G also use MIMO and Massive MIMO technology~\cite{5Gandrews, 6G}, it is  natural to ask whether MIMO and NOMA can be combined to deliver enhanced throughput relative to either scheme acting alone.

Indeed, these techniques were successfully combined by a multi-user MIMO-NOMA scheme proposed by Ding, Schober and Poor \cite{ding}, which has attracted considerable attention. The scheme of \cite{ding} was based on transmitting a linear combination of messages, mixed using a precoding matrix $P$. This $P$ is carefully designed in terms of the row spaces of the downlink channel matrices, in order to achieve signal alignment. The key property is that, for each receiver, all but one of the interfering messages are aligned in the same vector subspace, and so can be removed by projection into an orthogonal space, effectively reducing the system to a standard two-user NOMA situation. Section \ref{sec:setup} gives more details.

We consider the scheme of \cite{ding} from the point of view of an eavesdropper, in the sense of Wyner's wiretap channel as seen in Figure \ref{fig:wiretap} \cite{wyner}. Owing to the inherent randomness of the wireless medium,
we will assume that an eavesdropper has a randomly chosen channel, independent of the legitimate channel. As a result, the eavesdropper is extremely unlikely to see the same signal alignment that is achieved for the legitimate receiver. Hence, unlike the legitimate receivers, an eavesdropper cannot easily remove interfering messages meant for other receivers, and will see an inherently noisier channel.

\begin{figure}[b]
\centering
    \begin{tikzpicture}[node distance = 2cm]
        \node [block] (channelm) {Main Channel};
        \node [cloud, left of=channelm] (encoder) {Encoder};
        \node [cloud, right of=channelm] (decoder) {Decoder};
        \node [block, below of=channelm] (channele) {Eavesdropper Channel};
        \node [clear, left of=encoder] (source) {Message};
        \node [clear, right of=decoder] (outputm) {Output};
        \node [clear, right of=channele] (outpute) {Eavesdropper Ouput};
        \path [line] (source) -- (encoder);
        \path [line] (decoder) -- (outputm);
        \path [line] (channele) -- (outpute);
        \path [line] (encoder) -- (channelm);
        \path [line,dashed] (encoder) |- (channele);
        \path [line] (channelm) -- (decoder);

    \end{tikzpicture}

      \caption{The Wiretap Channel \cite{wyner}.}
       \label{fig:wiretap}

\end{figure}
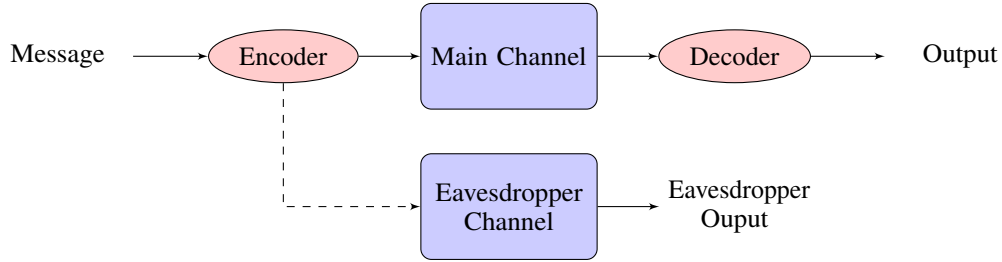

The secrecy capacity is the maximum rate at which it is possible to transmit with perfect secrecy and negligible errors. It is a natural extension of the capacity, with the additional requirement of a secure communication. It is well known that the secrecy capacity of the Gaussian wiretap channel \cite[Section 17.11]{CK_book} for a system with a single antenna transmitter, receiver and eavesdropper takes the form
\begin{equation}
    C_{s}= \max \left\{ \log(1+SINR_{M}) - \log(1+SINR_{E}), 0 \right\},
\end{equation}
where $SINR_{M}$ is the instantaneous signal-to-interference-plus-noise ratio (SINR) of the main channel and $SINR_{E}$ is the instantaneous SINR of the eavesdropper channel.
Hence when \begin{equation}SINR_{M}\geq SINR_{E},\end{equation} the secrecy capacity is non-negative and there exists a rate at which information can be sent with perfect secrecy. 

The structure of this paper is as follows:
Section \ref{sec:literature} introduces NOMA for a linear system and outlines the setup for the particular MIMO-NOMA scheme of interest in this work. Section \ref{sec:eavesdropping} considers a passive eavesdropper trying to intercept a message meant for a particular user in the system. We present the main work of this paper, providing bounds on the eavesdrop SINR and heuristic results about this SINR as the number of users increases. Section \ref{sec:large_limits} looks at limits as the number of antennas increases, representing a Massive MIMO scheme and Section \ref{sec:conclusion} concludes the work.

\section{MIMO-NOMA Systems}\label{sec:literature}

\subsection{Non-Orthogonal Multiple Access}
NOMA is a multiplexing technique, typically performed in the power domain, which was introduced by Saito et. al \cite{saito2013noma}. Users share a frequency and time slot but the power allocated to each user differs depending on their channel quality. Simply, a user with a poor channel is allocated a higher power than a user with a better channel.

In order to implement this, users with highly different channel characteristics are paired. Suppose that User 1 is closer to the base station and User 2 is further away with channel coefficients $h_{1}$ and $h_{2}$ respectively, then the base station transmits the message
$s=\alpha_{1}s_{1}+\alpha_{2}s_{2}$
where $s_i$ is the signal intended for user $i$ and the $\alpha_i$ are power allocation coefficients with $\alpha_1^2 + \alpha_{2}^{2}=1$. In this case, $\alpha_1 \leq \alpha_2$ as seen in Figure \ref{fig:NOMA_power}.

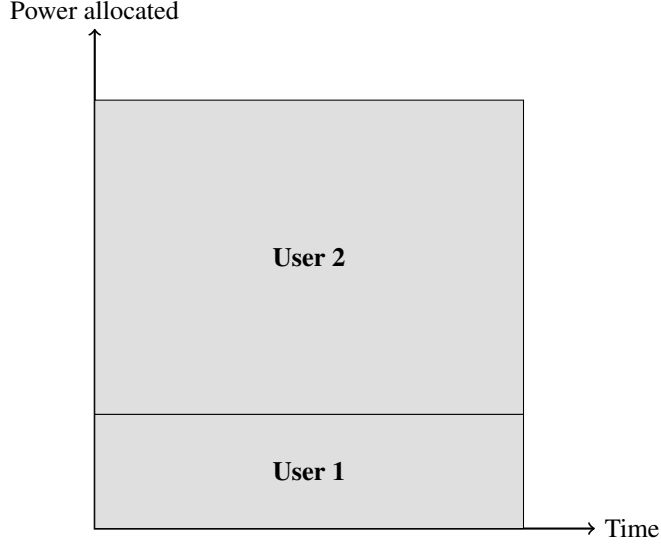
\begin{figure}[h]
    \centering
        \begin{tikzpicture}[scale=1.9]
        \draw [<->,thick] (0,3.5) node (yaxis) [above] {Power allocated}
            |- (3.5,0) node (xaxis) [right] {Time};
        \draw [draw=black, text=black, fill=lightgray!50] (0,0) rectangle (3,0.8) node[pos=.5, text=black] {\textbf{User 1}};
        \draw [draw=black, text=black, fill=lightgray!50] (0,0.8) rectangle (3,3) node[pos=.5, text=black] {\textbf{User 2}};
\end{tikzpicture}
    \caption{Power allocation in a NOMA system where User 2 has a worse channel than User 1.}
    \label{fig:NOMA_power}
\end{figure}

For $i=1,2$, user $i$ receives the message
$$y_i = h_i (\alpha_{1}s_{1}+\alpha_{2}s_{2}) + n_{i}$$
where $h_{i}$ is the channel coefficient and $n_{i}$ is noise. User 2 treats the message for User 1 as noise.
User 1 uses SIC to retrieve their message; first they find $s_2$ (which is an easier problem than for User 2, because they are closer to the base station), then they subtract this and solve for $s_1$.

\subsection{MIMO-NOMA System Setup}\label{sec:setup}
MIMO systems used in conjunction with NOMA can improve spectral efficiency \cite{DAP_2016} and thus these systems are of great interest. The key difference in a MIMO system versus a one dimensional system is that the base station now transmits a message which is a function of an information-bearing vector, where each row corresponds to the signal intended for a particular pair of receivers.


We will consider a downlink NOMA setup, and use the same notation, model and signal alignment scheme as \cite[Section II.A]{ding}. Consider a base station equipped with $M$ antennas and a collection of receivers each equipped with $N$ antennas, where $N > M/2$. The users are uniformly spaced around the base station. Assume the channel matrices from the base station to the particular users are of the form $G_m/\sqrt{L(d_m)}$ for a certain path loss function $L$ which depends on the distance $d_m$ defined as follows:
\begin{align}\label{eq:pathlossfn}
    L(d_{m})=
    \begin{cases}
    d_{m}^\alpha        &\text{if } d_{m}> r_{0}\\
    r_{0},              &\text{otherwise,}
    \end{cases}
\end{align}
for some constant $r_{0}$ and path loss exponent $\alpha$, usually between 2 and 6 for a typical 5G environment.
For brevity, we let $L_{m}$ denote $L(d_{m})$ for user $m$.

We select $M$ `near' users (within $r_1$ of the base station) and $M$ `far' users (between $r_1$ and $r_2$ from the base station) and pair them up randomly. It is required that $r_0\leq r_1$ in \eqref{eq:pathlossfn} to ensure continuity in $L$.
This setup can be seen in Figure \ref{fig:DSP_setup}.
In particular, consider pairing near users $m$ and far users $m'$ and creating a message vector $\vc{s}$ with $m$th component \begin{equation}
    \alpha_m s_m + \alpha_{m'} s_{m'},
\end{equation}
where $s_i$ is the signal intended for the $i$th user, and $\alpha_i$ are power allocation coefficients with $\alpha_m^2 + \alpha_{m'}^2 = 1$.
Since user $m'$ is further away we have that $\alpha_{m'}>\alpha_{m}$.

    \begin{figure}[h]
        \centering
          \begin{tikzpicture}[scale=0.82,
           FIT/.style args = {#1/#2/#3}{ellipse, draw=#1, inner xsep=#2, fit=#3},
        every edge quotes/.style = {fill=white, inner sep=1pt, font=\footnotesize}]
          \begin{pgfonlayer}{bg}
                \draw[thick,fill=gray!4] (0,0) circle (5);
          \end{pgfonlayer}
          \begin{pgfonlayer}{main}
                \draw[style=dashed, very thick,fill=none] (0,0) circle (2.5);
                \node[circle] (u1) at (335:1.25cm) {\footnotesize User $u_1$};        \node[circle] (u1') at (350:3.75cm) {\footnotesize User $u_1'$};        \node[circle] (u2) at (75:1.25cm) {\footnotesize User $u_m$};        \node[circle] (u2') at (107:3.75cm) {\footnotesize User $u_m'$};        \node[circle] (um) at (200:1.25cm) {\footnotesize User $u_M$};        \node[circle] (um') at (225:3.75cm) {\footnotesize User $u_M'$};
                \node[star,star points=4,star point ratio=0.4, draw=red, thick, fill=red!30] (s1) at (0:1.25cm) {};
                \node[star,star points=4,star point ratio=0.4, draw=red, thick, fill=red!30] (s1') at (0:3.75cm) {};
                \node[star,star points=4,star point ratio=0.4, draw=red, thick, fill=red!30] (s2) at (120:1.25cm) {};
                \node[star,star points=4,star point ratio=0.4, draw=red, thick, fill=red!30] (s2') at (120:3.75cm) {};
                \node[star,star points=4,star point ratio=0.4, draw=red, thick, fill=red!30] (sm) at (240:1.25cm) {};
                \node[star,star points=4,star point ratio=0.4, draw=red, thick, fill=red!30] (sm') at (240:3.75cm) {};
                \node[star,star points=7,star point ratio=0.8, draw=black, very thick, fill=gray!5] (bs) {\textbf{BS}};
                \draw[inner ysep=-3mm,inner xsep=-2mm,rotate=-90, color=blue, thick] (0,2.5) ellipse (0.5 and 1.8);
                \draw[inner ysep=-3mm,inner xsep=-2mm,rotate=30, color=blue, thick] (0,2.5) ellipse (0.5 and 1.8);
                \draw[inner ysep=-3mm,inner xsep=-2mm,rotate=150, color=blue, thick] (0,2.5) ellipse (0.5 and 1.8);
        \end{pgfonlayer}
        \end{tikzpicture}
        \caption{User pairings in the NOMA setup based on~\cite{ding}.}
        \label{fig:DSP_setup}
    \end{figure}

The key to the scheme of \cite{ding} is the construction of  an $M \times M$ precoding matrix $P$, which is designed to make it possible to remove interference at each pair of receivers, and to reduce the problem to standard 2-user NOMA by use of an appropriate detection vector $\vc{v}$. In the framework, $\vc{v}$ is designed using signal alignment between pairs, to satisfy
    \begin{align}
        \label{eq:sig_align}\left[ G_{m}^{H}\;\;-G_{m'}^{H}\right]
        \begin{bmatrix}\vc{v_{m}}\\\vc{v_{m'}} \end{bmatrix}=\vc{0}_{M\times 1}.
    \end{align}
The columns of $P$ are designed to be orthogonal to the $m$th detection vector passed through the $m$th channel, removing the interference of other pairs of users. The existence of such a $P$ is ensured by the above signal alignment.
This is formed via constructing a matrix
$$G = [ \vc{g}_1 \; \vc{g}_2 \; \ldots \vc{g}_M]^H,$$
with $\vc{g}_m$ being a particular vector in the intersection of the row spaces of $G_m$  and $G_{m'}$ given by $\vc{g}_m^H = \vc{v}_m^H G_m$ for a certain $\vc{v}_m$. Then $$P := G^{-1} F,$$ where $F$ is a diagonal matrix chosen to ensure power constraints are met at the base station.\footnote{Note this is different to \cite[Eq. (10)]{ding} which defines $P = G^{-H} D$ for a different diagonal matrix. Since $G$ has rows $\vc{g}_i^H$, and $P$ has columns $p_j$, the necessary condition \cite[Eq. (9)]{ding}
that $\vc{g}_i^H p_j = 0$ for $i \neq j$ is achieved by taking $G P$ diagonal. Here $F= {\rm diag}(\vc{f})$ where $\vc{g}_i^H p_i = f_i$.}

The base station transmits the product $P \vc{s}$ and user $m$ receives (see \cite[Eq. (2)]{ding}):
    \begin{align}\label{eq:general_received_sig}
  \vc{y}_m & = \frac{G_{m}}{\sqrt{L_m}}  (P \vc{s}) + \vc{n} \\
& = \frac{G_{m}}{\sqrt{L_m}} \left( \sum_{i=1}^M  \left(  \alpha_m s_m + \alpha_{m'} s_{m'} \right) \vc{p}_i \right) + \vc{n}
    \end{align}
where $N \times 1$ vector $\vc{n}$ is circularly symmetric Gaussian noise with covariance proportional to $\sigma^2 \neq 0$. Note that the scheme in \cite{ding} has a factor $\rho_{I}$ denoting shot noise; for the purposes of this work we will assume there is no shot noise ($\rho_{I}$=0).

An $N\times 1$ detection vector $\vc{u}$ is applied to $\vc{y}_m$. In \cite{ding}, the choice $\vc{u} = \vc{v}_m$ is made, where the construction of the precoding matrix $P$ ensures that $\vc{v}_m^H G_m \vc{p}_i = 0$ for $i \neq m$ and $\vc{v}_m^H G_m \vc{p}_m =
\vc{g}_m^H \vc{p}_m = f_m$. This means that interference is removed and the problem is reduced to a one-dimensional NOMA problem at each receiver, with
    \begin{equation}\label{eq:receivedsig}
    y_{m} := \vc{v}^H_m \vc{y}_{m} = \frac{f_m}{\sqrt{L_{m}}}  (\alpha_m s_m + \alpha_{m'} s_{m'}) + n
    \end{equation}
where $n := \vc{v}_m^H \vc{n}$ is Gaussian noise.
Choosing the detection vector $\vc{u}$ to remove interference in this way does not necessarily maximise the SINR and thus the choice may be sub optimal, particularly in the high noise regime. We perform an analysis of the outlined system where the detection vector is designed to minimise the sum of noise and interference.

\section{Analysis of Eavesdropper Channel}\label{sec:eavesdropping}

Consider an eavesdropping receiver with an $N \times M$ channel matrix $K/\sqrt{L_{e}}$, where $K$ has IID Rayleigh elements and $L_e = L(d_e)$ applies the same path loss function $L$ to the eavesdropper distance from the base station. Without loss of generality, we will assume that the eavesdropper is listening into the message intended for User pair 1 and 1'. Since User 1' is further away, their signal receives a greater power allocation and thus will be easier to eavesdrop. We aim to show that with high probability, the eavesdropper cannot gain useful information from the message for User 1', and therefore cannot detect the message for User 1 either.

The eavesdropper receives the $N\times 1$ vector
    \begin{align}
    \vc{y}_{e}& =\frac{K}{\sqrt{L_e}}  (P \vc{s}) + \vc{n}  \nonumber \\
    & = \frac{1}{\sqrt{L_e}} \left( \sum_{i = 1}^M (\alpha_m s_m + \alpha_{m'} s_{m'}) \vc{w}_i \right) + \vc{n} \label{eq:eve_received_sig}
    \end{align}
where $N \times 1$ vector $\vc{w}_i$ is the $i$th column of $W := K P$ and the other parameters are as in \eqref{eq:general_received_sig}.

\subsection{Optimal Detection Vector}
We will consider the SINR for the eavesdropper, under the assumption that the signals $s_i$ are independent with $\ep |s_i|^2 = \rho \sigma^2$ for transmit SNR $\rho$.
Without loss of generality, we assume the eavesdropper tries to decode message $s_{1'}$ with detection vector $\vc{u}$. The eavesdropper will view all other signals as noise. The overall SINR for the communication in Equation \eqref{eq:eve_received_sig}
becomes
\begin{equation} \label{eq:SINR_eve_noma}
\SINR_{E} = \frac{ \rho |\vc{u}^{*} \vc{w}_1|^2 \alpha_{1'}^2}{\rho |\vc{u}^{*} \vc{w}_1|^2 \alpha_{1}^2 + \rho \sum_{j=2}^M |\vc{u}^{*} \vc{w}_j|^2 + L_{E} \sum_{i=1}^N |u_i|^2 }.
\end{equation}
Given the assumption that the interference noise is 0, note that this is also the SNR.

\begin{theorem}\label{thm:opt_snr_e}
    The optimal eavesdropper SINR is of the form
    \begin{align}\label{eq:eve_SINR_NOMA}
     \SINR_{E} = \frac{ \rho \alpha_{1'}^2}{\rho \alpha_{1}^2  + \left( \vc{w}_1^{*} \left( \rho (\olW \olW^{*}) + L_{E} I_N \right)^{-1} \vc{w}_1
    \right)^{-1}}.\end{align}
\end{theorem}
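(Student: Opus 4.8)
The plan is to treat \eqref{eq:SINR_eve_noma} as an optimisation problem in the detection vector $\vc{u}$ and maximise it explicitly. First I would rewrite the two sums in the denominator as quadratic forms. Writing $\olW = [\vc{w}_2 \; \cdots \; \vc{w}_M]$ for the matrix of interfering columns, we have $\sum_{j=2}^M |\vc{u}^* \vc{w}_j|^2 = \vc{u}^* \olW \olW^* \vc{u}$ and $\sum_{i=1}^N |u_i|^2 = \vc{u}^* \vc{u}$. Collecting the interference and noise contributions into the Hermitian matrix
\begin{equation*}
A := \rho\, \olW \olW^* + L_E I_N,
\end{equation*}
which is positive definite because $L_E > 0$, the expression in \eqref{eq:SINR_eve_noma} becomes
\begin{equation*}
\SINR_E = \frac{\rho \alpha_{1'}^2 |\vc{u}^* \vc{w}_1|^2}{\rho \alpha_1^2 |\vc{u}^* \vc{w}_1|^2 + \vc{u}^* A \vc{u}}.
\end{equation*}

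Since this ratio is unchanged if $\vc{u}$ is scaled by any nonzero complex constant, and since the optimum certainly has $\vc{u}^* \vc{w}_1 \neq 0$ (taking $\vc{u} = \vc{w}_1$ already gives a positive SINR), I would divide numerator and denominator by $|\vc{u}^* \vc{w}_1|^2$ to obtain
\begin{equation*}
\SINR_E = \frac{\rho \alpha_{1'}^2}{\rho \alpha_1^2 + R(\vc{u})}, \qquad R(\vc{u}) := \frac{\vc{u}^* A \vc{u}}{|\vc{u}^* \vc{w}_1|^2}.
\end{equation*}
Maximising $\SINR_E$ is thus equivalent to minimising the generalised Rayleigh quotient $R(\vc{u})$, and the problem decouples cleanly from the power coefficients $\alpha_1, \alpha_{1'}$.

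The central step is to minimise $R(\vc{u})$. I would substitute $\vc{z} = A^{1/2} \vc{u}$, which is legitimate as $A$ is positive definite, so that $\vc{u}^* A \vc{u} = \|\vc{z}\|^2$ and $\vc{u}^* \vc{w}_1 = \vc{z}^* \vc{b}$ with $\vc{b} := A^{-1/2} \vc{w}_1$. Then $R = \|\vc{z}\|^2 / |\vc{z}^* \vc{b}|^2$, and the Cauchy--Schwarz inequality $|\vc{z}^* \vc{b}|^2 \leq \|\vc{z}\|^2 \|\vc{b}\|^2$ gives $R(\vc{u}) \geq 1/\|\vc{b}\|^2$, with equality precisely when $\vc{z}$ is parallel to $\vc{b}$, i.e. $\vc{u} \propto A^{-1} \vc{w}_1$. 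Since $\|\vc{b}\|^2 = \vc{w}_1^* A^{-1} \vc{w}_1$, the minimum is
\begin{equation*}
\min_{\vc{u}} R(\vc{u}) = \left( \vc{w}_1^* \left( \rho\, \olW \olW^* + L_E I_N \right)^{-1} \vc{w}_1 \right)^{-1},
\end{equation*}
and substituting this back yields exactly \eqref{eq:eve_SINR_NOMA}.

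The only real obstacle is the minimisation of $R(\vc{u})$; everything else is bookkeeping. An equivalent route, which I would note as a check, is to fix the scale by imposing the constraint $\vc{u}^* \vc{w}_1 = 1$ and minimise $\vc{u}^* A \vc{u}$ by Lagrange multipliers, recovering the classical minimum-variance (MVDR) beamformer $\vc{u} = A^{-1} \vc{w}_1 / (\vc{w}_1^* A^{-1} \vc{w}_1)$ and the same optimal value. The one subtlety worth checking carefully is that the phase of the complex scalar $\vc{u}^* \vc{w}_1$ is unconstrained, so that working with $|\vc{u}^* \vc{w}_1|^2$ rather than a real linear constraint does not alter the optimum; the Cauchy--Schwarz argument accommodates this automatically, since the equality condition fixes $\vc{u}$ only up to a complex scalar.
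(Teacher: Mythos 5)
Your proposal is correct, and it reaches the theorem by a genuinely different route from the paper. The paper fixes the normalisation $|\vc{u}^{*}\vc{w}_1|^2 = 1$ and forms the Lagrangian $\LL(\vc{u},\lambda) = \vc{u}^{*}\left(\rho\,\olW\olW^{*} + L_E I_N\right)\vc{u} - \lambda\, \vc{u}^{*}\vc{w}_1\vc{w}_1^{*}\vc{u}$; stationarity gives the generalised eigenvalue equation $\left(\rho\,\olW\olW^{*} + L_E I_N\right)\vc{u} = \lambda\, \vc{w}_1\vc{w}_1^{*}\vc{u}$, the constrained objective equals $\lambda$, and the closed form for $\lambda$ is then extracted by inverting the (positive definite) matrix — a step the paper compresses into ``after some algebraic manipulation.'' You instead exploit the scale invariance of the SINR in $\vc{u}$, reduce to minimising the generalised Rayleigh quotient $R(\vc{u}) = \vc{u}^{*}A\vc{u}/|\vc{u}^{*}\vc{w}_1|^2$, and solve that by the whitening substitution $\vc{z} = A^{1/2}\vc{u}$ plus Cauchy--Schwarz. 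This is essentially the classical MVDR derivation, and it buys you two things the paper's argument leaves implicit: first, Cauchy--Schwarz certifies \emph{global} optimality with its equality condition, whereas the Lagrangian route only produces first-order necessary conditions (the paper never checks that its stationary point is the minimum); second, you get the optimising beamformer $\vc{u} \propto A^{-1}\vc{w}_1$ explicitly, and you handle the phase-invariance of the complex constraint cleanly, a subtlety the paper's constraint $\vc{u}^{*}\vc{w}_1 = |\vc{u}^{*}\vc{w}_1|^2 = 1$ glosses over. Your closing remark about the Lagrange-multiplier/MVDR check is, in effect, a tidier version of the paper's actual proof, so the two arguments corroborate each other.
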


\begin{proof}
    
    We can find the optimal detection vector by fixing 
    \begin{align}
        \vc{u}^{*} \vc{w}_1 = \vc{w}_1^{*} \vc{u} = |\vc{u}^{*} \vc{w}_1|^2 = 1
    \end{align}
    and looking to minimise 
    \begin{align}\label{eq:optdetection_constraint}
        \rho \sum_{j=2}^M |\vc{u}^{*} \vc{w}_j|^2 + L_{E} \sum_{i=1}^N |u_i|^2.\end{align}
    The first term may be rewritten as $\rho$ multiplied by 
    \begin{align}
        \sum_{j=2}^M 
        \left( \sum_{r=1}^M u_r^* V_{rj} \right) \left( \sum_{s=1}^M u_s V_{sj}^* \right) &= \sum_{r,s=1}^M u_r^* u_s \sum_{j=2}^M V_{rj} V_{sj}^* \\
        & =  \sum_{r,s=1}^M u_r^* u_s (\olW \olW^{*})_{rs}\\
        & = \vc{u}^{*} (\olW \olW^{*}) \vc{u},
    \end{align}
    where $\olW = W - \vc{w}_1 \otimes (1, 0, \ldots, 0)$ is the matrix $W$ with its first column set to zero. 
    
    Hence, a Lagrangian formulation gives
    \begin{align}\label{eq:lagrangian}
        \LL(\vc{u}, \lambda) = \vc{u}^{*} \left( \rho (\olW \olW^{*}) + L_{E} I_N \right) \vc{u}  - \lambda \vc{u}^{*}  \left( \vc{w}_1 \vc{w}_1^{*} \right)\vc{u},
    \end{align}
    which is a complex Hermitian quadratic form and thus we find that
    \begin{align}\label{eq:lagrangian_deriv}
        \frac{\partial \LL(\vc{u}, \lambda)}{\partial \vc{u}} 
        & = 2(\rho \olW \olW^{*} + L_{E}I_{N})\vc{u} - 2\lambda  \vc{w}_1 \vc{w}_1^{*} \vc{u} = 0.
    \end{align}
    Therefore the  SINR (Equation \eqref{eq:SINR_eve_noma}) may be written as
    \begin{align}
         \SINR =& \frac{ \rho \alpha_{1'}^2}{\rho \alpha_{1}^2 + \vc{u}^{*} \left( \rho (\olW \olW^{*}) + L_{E} I_N \right) \vc{u}^*   }\\
        =&  \frac{ \rho \alpha_{1'}^2}{\rho \alpha_{1}^2 + \lambda \vc{u}^{*} \left( \vc{w}_1 \vc{w}_1^{*} \right) \vc{u}^*}
        = \frac{ \rho \alpha_{1'}^2}{\rho \alpha_{1}^2 + \lambda }.
    \end{align}

    Since $L_{E} \neq 0$, the matrix $\left( \rho (\olW \olW^{*}) + L_{E} I_N \right)$ is invertible. Hence, after some algebraic manipulation, the result follows.
\end{proof}

\begin{remark}
Note that 
the corresponding analysis will give the optimal detection vector and SINR for the legitimate user. In general this will not coincide with the choice $\vc{u} = \vc{v}_m$ made above in the analysis of Equation \eqref{eq:general_received_sig}, since that choice removes interference potentially at the cost of increased noise, whereas our analysis considers interference and noise together.    
\end{remark}

From the point of view of Physical Layer Security, if the eavesdropper channel has smaller SINR than the legitimate channel, the true message can be protected by transmitting at the relevant rate. In order to compare the two channels, we will compare the optimal SINR in each case, though note that the expression~\cite[Eq. (15)]{ding} gives a tractable upper bound on the optimal legitimate SINR. 

\section{Bounding the Eavesdropper SINR}\label{sec:bounds}

While Equation \eqref{eq:eve_SINR_NOMA} gives a closed form expression for the optimal SINR, it is stated in terms of the random quantities $\vc{w}_1$ and $\olW$, and is therefore not particularly tractable.

Writing $Z = \rho (\olW \olW^H) + L_{e} I_N$, and $R( \cdot)$ for the Rayleigh quotient, we obtain the bound
\begin{align} 
\vc{w}_1^H \left( \rho (\olW \olW^H) + L_{e} I_N \right)^{-1} \vc{w}_1   \nonumber 
&= \vc{w}_1^H \vc{w}_1 R(Z^{-1}; \vc{w}_1) \nonumber \\
 \leq  \frac{ \vc{w}_1^H \vc{w}_1 }{\lambda_{\min}(Z)} 
& = \frac{ \vc{w}_1^H \vc{w}_1 }{\rho \lambda_{\min}(\olW \olW^H) + L_{e}} \label{eq:evalbd1} \\
&\leq \frac{ \vc{w}_1^H \vc{w}_1 }{ L_{e}}. \label{eq:evalbd2} \end{align}
Note that this gives a conservative bound, since it considers the worst case and not the average case. Direct application of Equation \eqref{eq:evalbd2} means that the SINR in Equation \eqref{eq:eve_SINR_NOMA} is bounded above by
\begin{align}
\SINR \leq \frac{\rho\alpha_{1'}^{2} \vc{w}_{1}^{H} \vc{w}_{1}}{\rho\alpha_{1}^{2} \vc{w}_{1}^{H} \vc{w}_{1} + L_{e}}
\leq \frac{\rho\alpha_{1'}^{2} EW}{\rho\alpha_{1}^{2} EW + L_{e}},
\label{eq:SINRbd}
\end{align}
where $EW$ is the expectation of $ \vc{w}_{1}^{H} \vc{w}_{1}$, and the 
second inequality follows by Jensen's inequality. We plot this result in Figure \ref{fig:matlab_plot}, which shows how eavesdropper SINR decays with distance as expected, and that (owing to lack of signal alignment) on average the eavesdropper performs worse than a legitimate receiver at the same distance.

\begin{figure}[h!]
    \centering
       \includegraphics[width=0.65\textwidth]{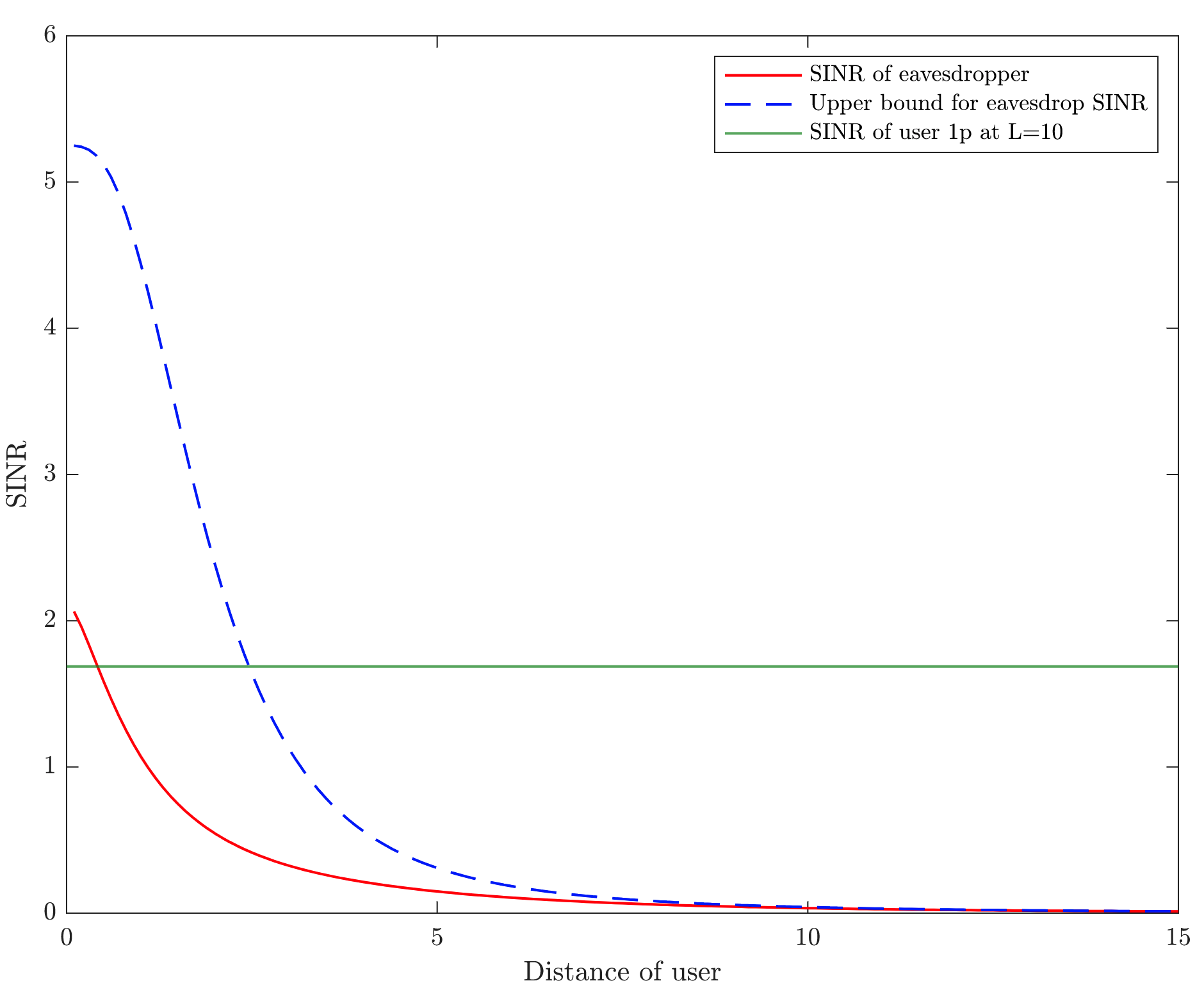}
    \caption[SINR vs distance for a MIMO-NOMA scheme]{SINR vs User distance for $M=7$, $N=5$, $\rho=5$ and legitimate users as in~\cite[Example 1]{ding}. We plot the upper bound on eavesdropper SINR from \eqref{eq:SINRbd} in blue, the empirical eavesdropper SINR from simulation in red, and the legitimate SINR in green.}
  \label{fig:matlab_plot}
\end{figure}
%


\section{Large Antenna Limits} \label{sec:large_limits}
As the number of antennas in the MIMO system are scaled up, representing a Massive MIMO NOMA setup~\cite{noma_5G}, we can argue heuristically that the eavesdropper SINR vanishes.

\begin{prop}\label{prop:noma_antenna_limits}
    In the limit of $N=\gamma M$ for $\frac{1}{2}<\gamma < 1$ then $\SINR_{E}\to 0$ at a rate of $\mathcal{O}\left(\frac{1}{M}\right)$.
\end{prop}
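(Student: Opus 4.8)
The plan is to reduce the whole statement to the behaviour of the single random scalar
\[ T := \vc{w}_1^{H}\left(\rho\,\olW\olW^{H} + L_{e} I_N\right)^{-1}\vc{w}_1, \]
since Theorem \ref{thm:opt_snr_e} writes $\SINR_E = \rho\alpha_{1'}^2/(\rho\alpha_1^2 + T^{-1})$, whence $\SINR_E \le \rho\alpha_{1'}^2\,T$ and proving $\SINR_E = \mathcal{O}(1/M)$ amounts to proving $T = \mathcal{O}(1/M)$ with high probability. The driving mechanism is that when $N = \gamma M$ with $\gamma < 1$ the $N\times N$ interference Gram matrix $\olW\olW^{H}$ is assembled from $M-1 > N$ column directions, so it becomes full rank and uniformly well conditioned, with its entire spectrum living on the scale $M$.

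First I would expose the random-matrix structure by writing $\olW = K\olP$, where $\olP$ is the precoder with its first column deleted and $K$ has i.i.d. Rayleigh entries independent of the legitimate channel (hence of $\olP$). Conditioning on $\olP$, the matrix $\olW\olW^{H} = K(\olP\olP^{H})K^{H}$ is a correlated Wishart-type matrix, and I would invoke Marchenko--Pastur / Bai--Yin theory: since the aspect ratio $N/(M-1) \to \gamma < 1$, the spectrum of $\tfrac{1}{M}\olW\olW^{H}$ converges to a law supported away from the origin, so that $\lambda_{\min}(\olW\olW^{H}) \ge c(\gamma)\,M$ with high probability for some $c(\gamma) > 0$. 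This is exactly where the two bounds on $\gamma$ play their roles: $\gamma < 1$ guarantees strictly more interference directions than receive dimensions, which keeps the smallest eigenvalue from collapsing to zero, while $\gamma > 1/2$ is the standing requirement $N > M/2$ under which the scheme, and hence $\olP$ itself, is defined.

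Given $\lambda_{\min}(\olW\olW^{H}) = \Theta(M)$, the matrix $Z := \rho\,\olW\olW^{H} + L_e I_N$ satisfies $\lambda_{\min}(Z) = \Theta(M)$, and the bound \eqref{eq:evalbd1} already yields $T \le \|\vc{w}_1\|^2/\lambda_{\min}(Z)$. It would then remain to control the numerator $\|\vc{w}_1\|^2 = \vc{w}_1^{H}\vc{w}_1$, which I would handle by a standard $\chi^2$-type (Hanson--Wright) concentration for $\vc{w}_1 = K\vc{p}_1$, pinning it near its mean with high probability. Substituting back into $\SINR_E \le \rho\alpha_{1'}^2\,T$ and collecting the $\Theta(M)$ from the denominator would then deliver the advertised $\mathcal{O}(1/M)$ decay, with an implied constant depending on $\gamma$, $\rho$ and the power coefficients.

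The step I expect to be the main obstacle is the balance between numerator and denominator. Under a uniform per-column normalisation, $\|\vc{w}_1\|^2$ is itself of order $N = \Theta(M)$, the same order as $\lambda_{\min}(Z)$, so the crude bound $T \le \|\vc{w}_1\|^2/\lambda_{\min}(Z)$ gives only $T = \mathcal{O}(1)$ and must be sharpened to recover the extra factor $1/M$. To do so I would not treat $\vc{w}_1$ as independent of $\olW$ --- both are generated by the same $K$ --- but instead apply the push-through identity to rewrite $T$ through $K^{H}K$ and $\olP\olP^{H}$, and exploit a per-user transmit-power budget under which the signal column $\vc{p}_1$ carries energy $\Theta(1/M)$ while the aggregate interference energy remains $\Theta(1)$. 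Making this normalisation explicit, and checking that the deleted-column and correlation corrections are genuinely lower order, is the delicate point on which the $\mathcal{O}(1/M)$ rate, as opposed to mere boundedness of $\SINR_E$, ultimately hinges.
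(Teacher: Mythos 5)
Your route is the same as the paper's: reduce to the quadratic form $T = \vc{w}_1^{*}\left(\rho\,\olW\olW^{*}+L_e I_N\right)^{-1}\vc{w}_1$ via Theorem \ref{thm:opt_snr_e}, invoke Marchenko--Pastur asymptotics to get $\lambda_{\min}(\olW\olW^{*})=\Theta(M)$ (the paper's Equation \eqref{eq:MP_lambda_min}), and feed this into the eigenvalue bound \eqref{eq:evalbd1}. Where you go beyond the paper is in flagging the numerator: the paper's (avowedly heuristic) proof simply calls the resulting bound ``a scalar value'' and concludes it vanishes, i.e.\ it implicitly takes $\vc{w}_1^{*}\vc{w}_1 = O(1)$, whereas the normalisation under which \eqref{eq:MP_lambda_min} holds with an $M$-independent constant $c$ (entries of $\olW$ of constant variance) forces $\vc{w}_1^{*}\vc{w}_1 = \Theta(N) = \Theta(M)$. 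Your diagnosis that the crude bound then delivers only $T=O(1)$, hence only boundedness of $\SINR_E$ rather than decay, is exactly right, and is sharper than the paper's own treatment of this point.

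However, your proposed repair does not close the gap. The ratio in \eqref{eq:evalbd1} is essentially invariant under a uniform rescaling of the columns of $P$, because the scale enters numerator and denominator identically: under your per-user budget, every column satisfies $\|\vc{p}_j\|^2=\Theta(1/M)$ (signal energy $\Theta(1/M)$, aggregate interference energy $\Theta(1)$ spread over $M-1$ columns), so on the one hand $\vc{w}_1^{*}\vc{w}_1=\|K\vc{p}_1\|^2=\Theta(N/M)=\Theta(1)$, but on the other hand the nonzero eigenvalues of $\olP\olP^{*}$ are also $\Theta(1/M)$, so $\lambda_{\min}(\olW\olW^{*})=\lambda_{\min}(K\olP\olP^{*}K^{*})$ drops from $\Theta(M)$ to $\Theta(1)$ as well. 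The bound again gives $T\leq \Theta(1)/\left(\rho\,\Theta(1)+L_e\right)=\Theta(1)$: the factor $1/M$ never materialises. What would produce the advertised rate is the signal column $\vc{p}_1$ being a factor $M$ smaller than each \emph{individual} interference column, but that contradicts your own stipulation that the aggregate interference energy is $\Theta(1)$, which makes each interference column the same size as $\vc{p}_1$. Indeed, large-system MMSE theory (Tse--Hanly type fixed-point analysis) shows that with $M-1>N$ comparable-power interferers, $\rho T$ converges to a positive constant of order $\gamma/(1-\gamma)$, so no refinement along these lines can yield $T=O(1/M)$ without injecting genuinely new structure (the specific form $P=G^{-1}F$, or $M$-dependent power coefficients). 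The step you yourself single out as the crux --- and which the paper's heuristic silently skips --- therefore remains open in your proposal, and as stated your argument, like the paper's, establishes at most that $\SINR_E$ stays bounded, not that it vanishes at rate $\mathcal{O}(1/M)$.
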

\begin{proof}
    Recall that $N>M/2$, so as $M$ increases so does $N$. Thus we can apply the Mar\v{c}enko--Pastur theory~\cite{marchenko}, in a regime where the number of antennas $M$ is large and $N/M \to \gamma$ (for some  $1/2 < \gamma < 1$), we have that
        \begin{align}\label{eq:MP_lambda_min}
        \lambda_{\min}(\olW \olW^{*}) \simeq c(1 - \sqrt{\gamma})^2 M
        \end{align}
    for some positive constant $c$.
    Hence for any fixed distance $L_{E}$, for $M$ sufficiently large the $\lambda_{\min}$ term will become the dominant one in Equation \eqref{eq:evalbd1} which may be estimated as
        \begin{align}
        \vc{w}_1^{*} \left( \rho (\olW \olW^{*}) + L_{E} I_N \right)^{-1} \vc{w}_1 \nonumber \\
         \leq  \frac{ \vc{w}_1^{*} \vc{w}_1 }{\rho \lambda_{\min}(\olW \olW^{*}) + L_{E}} \nonumber\\
         \simeq \frac{ \vc{w}_1^{*} \vc{w}_1 }{\rho c(1 - \sqrt{\gamma})^2 M + L_{E}}
        \end{align}
    which is a scalar value.
    Consequently, the SINR of the eavesdropper will be bounded by
        \begin{align} \label{eq:optSINR_bound}
            \SINR_{E} &= \frac{ \rho \alpha_{1'}^2}{\rho \alpha_{1}^2  + \left( \vc{w}_1^{*} \left( \rho (\olW \olW^{*}) + L_{E} I_N \right)^{-1} \vc{w}_1
            \right)^{-1}}\\
            & \leq 
            \frac{ \rho \alpha_{1'}^2}{\rho \alpha_{1}^2  + \frac{ \rho c(1 - \sqrt{\gamma})^2 M + L_{E} }{
            \vc{w}_{1}^{*} \vc{w}_{1} }
            },
        \end{align}
    which becomes arbitrarily small for large $M$.
    That is, from any position, with enough antennas and user pairs, no eavesdropping is possible.
\end{proof}

For a scenario with 50 antennas at each user, and 25 pairs of users, a result can be seen in Figure \ref{fig:50_antennas}.

\begin{figure}
    \centering
       \includegraphics[width=0.65\textwidth]{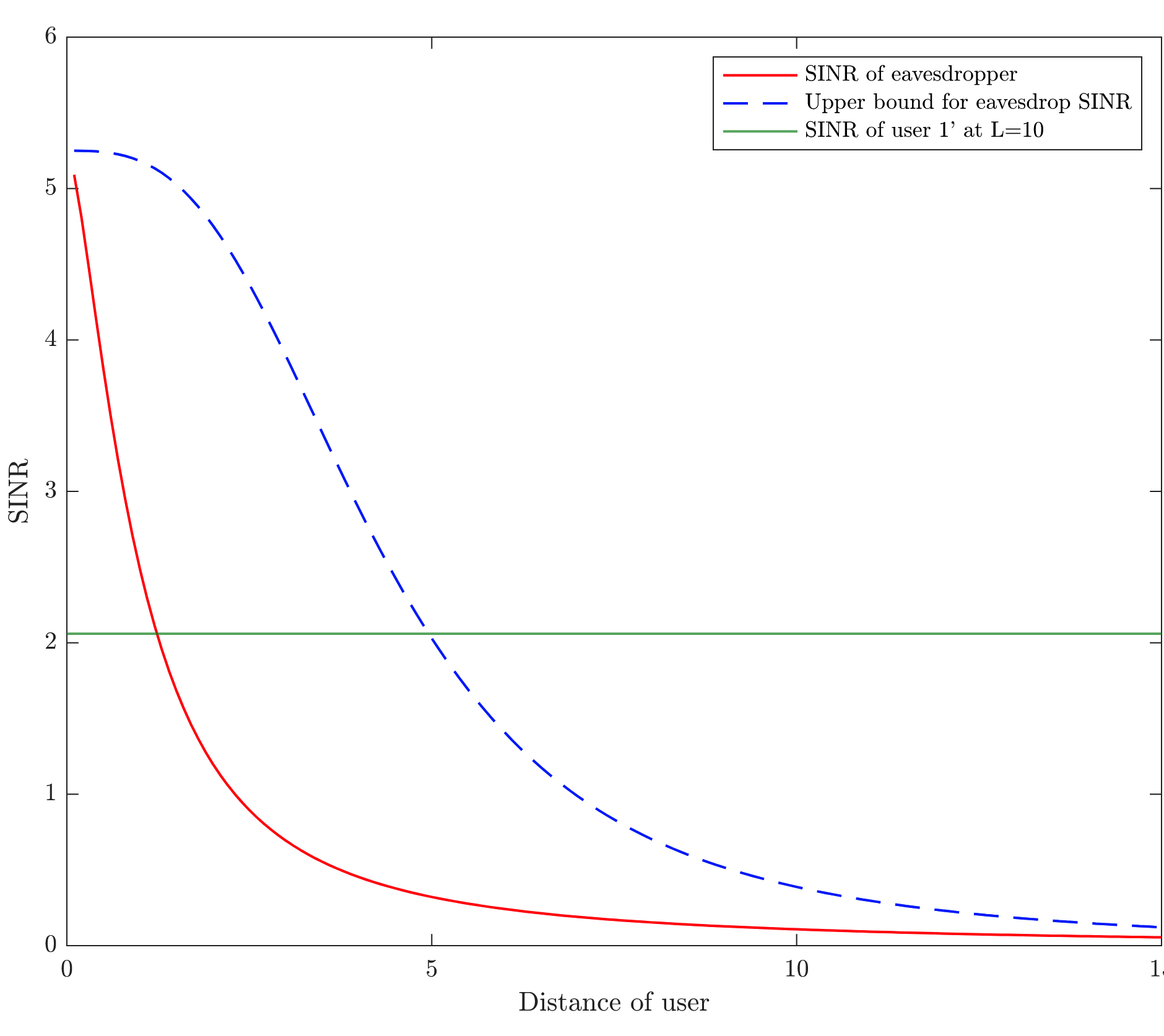}
    \caption[SINR vs distance for a MIMO-NOMA scheme]{SINR vs User distance for $M=50$ base station antennas where $\rho=10$. We plot the average empirical eavesdropper SINR from simulation in red, and the legitimate SINR in green. User 1p denotes user 1'..}
  \label{fig:50_antennas}
\end{figure}

\section{Conclusion}\label{sec:conclusion}

Schemes combining MIMO and NOMA provide great promise for the demands of 5G new radio and 6G and are likely to appear in real life systems during the upcoming years, with experimental results confirming the potential performance gains~\cite{BKY_MIMO_NOMA_trials}. While security is a key factor in modern day communication systems, it is vital to investigate their robustness to a passive eavesdropper. This work examined the combination of MIMO and NOMA in the system proposed by \cite{ding} where the message is precoded according to the legitimate user channels. This means that the message is easy to recover by a legitimate user, but difficult for users without the right channel.

It may seem that the eavesdropper could become lucky and if well aligned with the legitimate user, they could obtain the message. This work shows that as the number of user pairs increases, this is untrue and regardless of position, the SINR of the eavesdropper tends to zero with distance, meaning that they can obtain no useful information from their eavesdropping.

These results are promising for the inherent security of MIMO-NOMA systems. Since 5G, 6G and future wireless networks are densely populated with users the results in Section \ref{sec:bounds} are particularly relevant to real life architectures.

\section{Acknowledgments}\label{sec:acknowledgements}

This work was supported by the Engineering and Physical
Sciences Research Council [grant number EP/I028153/1]; GCHQ;
and the University of Bristol.

\bibliographystyle{IEEEtran}
\bibliography{bibliography.bib}

\end{document}